\DeclareMathOperator{\Ex}{\mathbb{E}}
\definecolor{burgundy}{rgb}{0.5, 0.0, 0.13}
\definecolor{crimson}{rgb}{0.86, 0.08, 0.24}\usepackage{mathtools,extarrows}
\newtheorem{theorem}{Theorem}
\newtheorem{definition}{Definition}
\newtheorem{proposition}{Proposition}
\newtheorem{corollary}{Corollary}
\newtheorem{lemma}{Lemma}
\renewcommand{\vec}[1]{\mathbf{#1}}
\crefname{figure}{Figure}{Figure}
\definecolor{mygreen}{rgb}{0.0, 0.55, 0.0}
\definecolor{blue-violet}{rgb}{0.54, 0.17, 0.89}
\newcommand{\eps}{\varepsilon}
\renewcommand{\E}{\mathbb{E}}
\newcommand{\explain}[1]{\tag{\textcolor{gray}{#1}}}
\definecolor{airforceblue}{rgb}{0.36, 0.54, 0.66}
\definecolor{darkblue}{rgb}{0.0, 0.0, 0.55}
\title{The Randomized Query Complexity of Finding a Tarski Fixed Point on the Boolean Hypercube\footnote{This research was supported by US National Science Foundation CAREER grant CCF-2238372. Alphabetical author order.}}
\author{Simina Br\^anzei\footnote{Purdue University. E-mail: simina.branzei@gmail.com.} \and Reed Phillips\footnote{Purdue University. E-mail: phill289@purdue.edu.} \and Nicholas Recker\footnote{Epic. E-mail: nrecker@umich.edu.}}
\date{\today} 
\begin{document}

\maketitle 

\begin{abstract}
The Knaster-Tarski theorem, also known as Tarski's theorem,  guarantees that every monotone function defined on a complete lattice has a fixed point. 
We analyze the query complexity of finding such a fixed point on the 
$k$-dimensional grid of side length $n$ under the 
$\leq$ relation. Specifically, there is an unknown monotone function $f: \{0,1,\ldots, n-1\}^k \to \{0,1,\ldots, n-1\}^k$ and  an algorithm must query a vertex $v$ to learn  $f(v)$.

A key special case of interest is the Boolean hypercube $\{0,1\}^k$, which is isomorphic to the power set lattice—the original setting of the Knaster-Tarski theorem. We prove a lower bound that characterizes the randomized and deterministic query complexity of the Tarski search problem on the Boolean hypercube as $\Theta(k)$. More generally, we give a randomized lower bound of $\Omega\left( k + \frac{k \cdot \log{n}}{\log{k}} \right)$ for the $k$-dimensional grid of side length $n$, which is asymptotically tight in high dimensions when 
$k$ is large relative to $n$. 
\end{abstract}

\section{Introduction}


The Knaster-Tarski theorem, also known as Tarski's theorem, guarantees that every monotone function $f : \mathcal{L} \to \mathcal{L}$ defined over a complete lattice $(\mathcal{L}, \leq )$ has a fixed point. 
Tarski proved the most general form of the theorem \cite{tarski1955lattice}:
\begin{quote}
    \emph{Let $(\mathcal{L}, \leq)$ be a complete lattice and let $f : \mathcal{L} \to \mathcal{L}$ be an order-preserving (monotone) function with respect to $\leq$. Then the set of fixed points of $f$ in $\mathcal{L}$ forms a complete lattice under $\leq$.}
\end{quote}
 An earlier version was shown by Knaster and Tarski \cite{Knaster_Tarski}, who established the result for the special case where $\mathcal{L}$ is the lattice of subsets of a set (i.e. the power set lattice).
 
This is a classical theorem with broad applications. For example, in  formal semantics of programming languages and abstract interpretation,   the existence of fixed points can be exploited to guarantee well-defined semantics for a recursive algorithm \cite{tarski_pl_note}. In game theory, Tarski's theorem implies the existence of pure Nash equilibria in supermodular games~\cite{etessami2019tarski}.
Surprisingly, it is not fully understood how efficiently Tarski fixed points can be found. 

Formally, for $k,n \in \mathbb{N}$, let $\mathcal{L}_{n}^k = \{0, 1, \ldots, n-1\}^k$ be the $k$-dimensional grid of side length $n$.
Let $\leq$ be the binary relation where for vertices ${a} = (a_1, \ldots, a_k) \in \mathcal{L}_n^k$ and ${b} = (b_1, \ldots, b_k) \in \mathcal{L}_n^k$, we have  ${a} \leq {b}$ if and only if $a_i \leq b_i$ for each $i \in [k]$. 
We consider the lattice $(\mathcal{L}_n^k, \leq)$. A  function $f :\mathcal{L}_n^k \to \mathcal{L}_n^k$ is monotone if ${a} \leq {b}$ implies that $f({a}) \leq f({b})$. 
Tarski's theorem  states that the set $P$ of fixed points of $f$ is non-empty and that the system $(P, \leq)$ is itself a complete lattice \cite{tarski1955lattice}.
    
In this paper, we focus on the query model, where there is an unknown monotone function $f:\mathcal{L}_n^k \to \mathcal{L}_n^k$. An algorithm has to probe a vertex $v$ in order to learn the value of the function $f(v)$. The  task is to find a fixed point of $f$ by probing as few vertices as possible.  
The randomized query complexity  is the expected number of queries required to find a solution with a probability of at least $9/10$ \footnote{Any other constant greater than $1/2$ would suffice.}, where the expectation is taken over the coin tosses of the algorithm.

There  are two main  algorithmic approaches for finding a Tarski fixed point. The  first approach is a divide-and-conquer method that yields an upper bound of $O\left((\log{n})^{\lceil \frac{k+1}{2} \rceil}\right)$ for any fixed $k$ due to  \cite{chen2022improved}, which improves  an algorithm of  \cite{fearnley2022faster}. 

The second  is a path-following method that initially queries the vertex $\vec{0} = (0, \ldots, 0)$ and proceeds by following the directional output of the function.
With each function application, at least one coordinate is incremented, which guarantees that  a fixed point is reached within $O(nk)$ queries.

 \cite{etessami2019tarski} proved a randomized query complexity lower bound of $\Omega\left(\log^2(n)\right)$ on the 2D grid of side length $n$, which implies the same lower bound for the $k$-dimensional grid of side length $n$ when $k$ is constant. This lower bound shows that the divide-and-conquer algorithm is optimal for dimensions $k=2$ and $k=3$. 

For dimension $k \geq 4$, there is a growing gap between the best-known upper and lower bounds, since  
 the upper bound given by the divide-and-conquer algorithm has an exponential dependence on $k$. Meanwhile, the path-following method provides superior performance in high dimensions, such as the Boolean hypercube $\{0,1\}^k$, where it achieves an upper bound of $O(k)$.

\subsection{Our Contributions}

Let 
 ${TARSKI}(n,k)$ denote the Tarski search problem on the $k$-dimensional grid of side length $n$. 
\begin{definition} [${TARSKI}(n,k)$]
Let $k,n \in \mathbb{N}$.
     Given oracle access to an unknown monotone function $f : \mathcal{L}_n^k \to \mathcal{L}_n^k$, find a vertex $x \in \mathcal{L}_n^k$ with $f(x) = x$ using as few queries as possible.
 \end{definition}

 Note that the special case of the hypercube $TARSKI(2, k)$ is isomorphic to the power set lattice, as a $k$-dimensional bit vector can be interpreted as indicating which of $k$ elements are in a subset.

Our main result is the following:
\begin{theorem} \label{thm:intro_main}
The  randomized query complexity of ${TARSKI}(n,k)$ is  $\Omega \left( k + \frac{k \cdot \log{n}}{\log{k}} \right)$.
\end{theorem}
The lower bound in \cref{thm:intro_main} is sharp for   constant $n \geq 2$ and nearly optimal in the general case when $k$ is large relative to $n$.

\cref{thm:intro_main} gives a characterization of $\Theta(k)$ for the randomized  and deterministic query complexity on the Boolean hypercube $\{0,1\}^k$, and thus the power set lattice,  since the deterministic path-following method that iteratively applies the function starting from  vertex $\vec{0} = (0, \ldots, 0)$ finds a solution within $O(k)$ queries. 
No lower bound better than $\Omega(1)$ was known for the Boolean hypercube.

\begin{corollary}
    The randomized and deterministic query complexity of ${TARSKI}(2,k)$ is $\Theta(k)$.
\end{corollary}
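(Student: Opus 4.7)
The plan is to derive the corollary as a short consequence of \cref{thm:intro_main} paired with the path-following algorithm mentioned in the introduction. Matching an $O(k)$ upper bound with an $\Omega(k)$ lower bound immediately yields $\Theta(k)$ for both the randomized and deterministic models.

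For the upper bound I would spell out the deterministic path-following procedure. Starting from $x_0 = \vec{0}$, the algorithm queries $x_t$ and, whenever $f(x_t) \neq x_t$, sets $x_{t+1} = f(x_t)$. Since $f(\vec{0}) \geq \vec{0}$, monotonicity gives by induction that $x_0 \leq x_1 \leq x_2 \leq \cdots$ coordinatewise. On $\{0,1\}^k$ every strict step flips at least one coordinate from $0$ to $1$, so there can be at most $k$ strict steps before the sequence stabilizes. This bounds the total number of queries by $k+1 = O(k)$ and, being deterministic, it also bounds the randomized complexity.

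For the lower bound I would specialize \cref{thm:intro_main} to $n = 2$. Substituting yields a randomized query lower bound of
\[
c \cdot \left( k + \frac{k \cdot \log 2}{\log k} \right) \;=\; c \cdot k \cdot \left( 1 + \frac{\log 2}{\log k} \right) \;\geq\; c \cdot k,
\]
which is $\Omega(k)$. Since any deterministic algorithm is a (degenerate) randomized one, the same bound transfers to deterministic query complexity.

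The principal obstacle here is not the corollary itself, but rather \cref{thm:intro_main}, which must already be available. Once the main theorem is in hand, the argument above is essentially bookkeeping: observe that the $\frac{k \log n}{\log k}$ term collapses to $\Theta(k/\log k)$ at $n=2$, and that the additive $k$ term carries the bound. No additional adversary construction or algorithmic idea is required beyond what the introduction and \cref{thm:intro_main} already provide.
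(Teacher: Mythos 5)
Your proposal is correct and follows essentially the same route as the paper, which justifies the corollary by combining the $\Omega(k)$ lower bound of \cref{thm:intro_main} (at $n=2$) with the $O(k)$ deterministic path-following upper bound starting from $\vec{0}$. Your write-up just makes the monotone-iteration argument for the upper bound explicit, which is fine.
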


We obtain  \cref{thm:intro_main} by designing the following family of monotone functions.




\begin{definition}[Set of functions $\mathcal{F}_n^k$] \label{def:F}
For each  $a \in \mathcal{L}_n^k$,  
we define a function $f^a : \mathcal{L}_n^k \to \mathcal{L}_n^k$ coordinate by coordinate. That is, 
for each $v =(v_1, \ldots, v_k) \in \mathcal{L}_n^k$ and $i \in [k]$, let
\begin{align} \label{eq:f_pi_q_def}
   f^a_i(v) = \begin{cases}
        v_i - 1 & \text{ if $(v_i > a_i)$ and $(v_j \le a_j$ for all $j<i)$}\\
        v_i + 1 & \text{ if $(v_i < a_i)$ and $(v_j \ge a_j$ for all $j<i)$}\\
        v_i & \text{ otherwise.}
    \end{cases}
\end{align}
Let $f^a(v) = (f^a_1(v), \ldots, f^a_k(v))$.
Define $\mathcal{F}_n^k = \{f^a \mid a \in \mathcal{L}_n^k\}$.
\end{definition}

The intuition is that the first digit that is too low and the first digit that is too high both get pushed towards their correct value.
An example of a function from Definition~\ref{def:F}  is shown in Figure~\ref{fig:example_function}. 

\begin{figure}[h!]
\centering 
\includegraphics[scale=0.6]{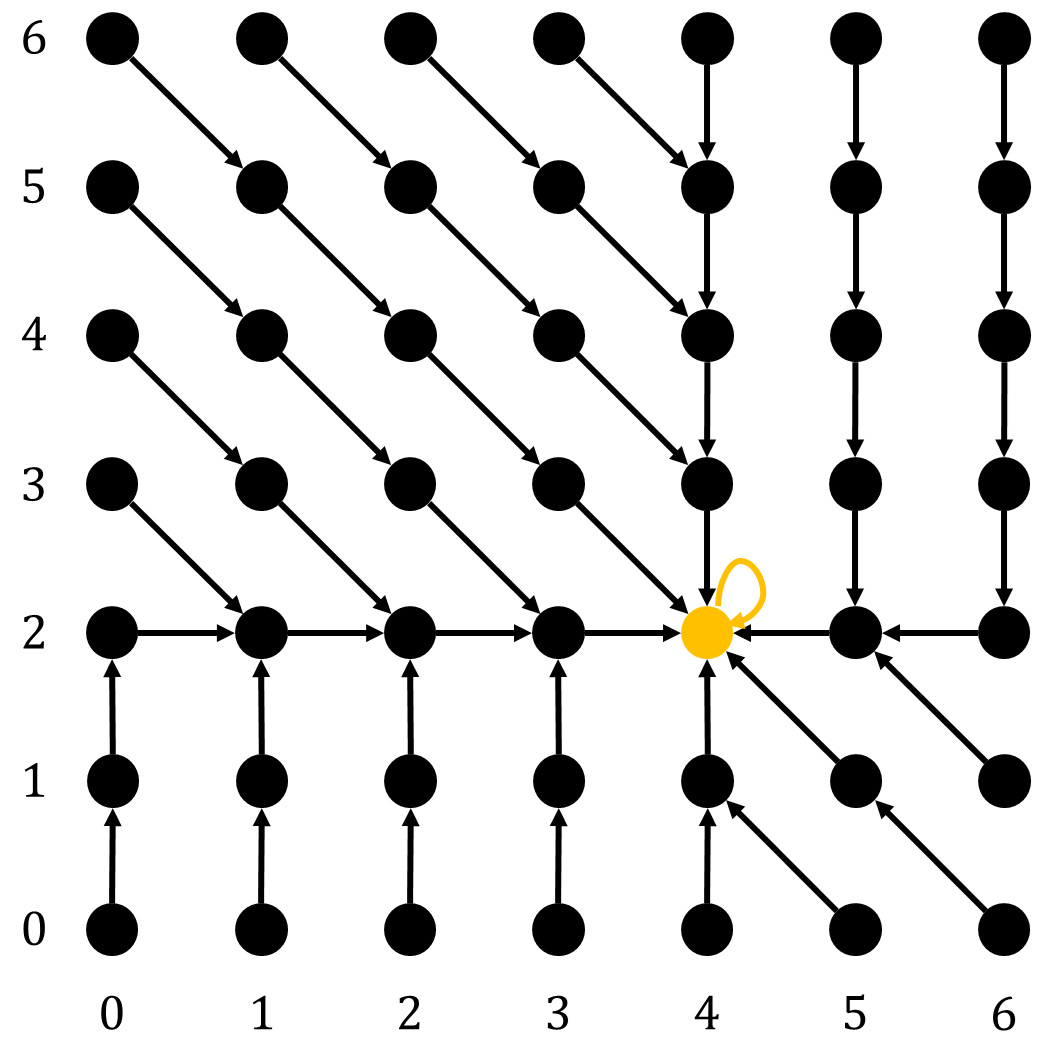}
\caption{Example of a function $f^a$ from Definition~\ref{def:F} where $a = (2,4)$ on the 2D grid of side length $7$ (i.e.  $n=7$ and $k=2$). The function $f^a$ has a unique fixed point at $a = (2,4)$. An arrow in the picture from a node $(u_1, v_1)$ to a node $(u_2, v_2)$ means that $f^a(u_1, v_1) = (u_2, v_2)$. For example, $f(0,0) = (0, 1)$. The fixed point is shown in yellow.}
\label{fig:example_function}
\end{figure}


For $k=2$, this construction is similar to the herringbone construction of \cite{etessami2019tarski}; however, our construction  does not induce a lower bound of  $\log^2(n)$ on the 2D grid since the shape of the path from $(0,0)$ or $(n-1,n-1)$ to the fixed point is too predictable. 

The real strength of our construction emerges  for large $k$, where the herringbone is not defined. Critically, a function $f \in \mathcal{F}_n^k$ has the property that $f(v)$ differs from $v$ in at most $2$ dimensions for all $v$, no matter how large $k$ is. This makes it difficult to derive information about more than a constant number of dimensions with a single query. The proof of \cref{thm:intro_main} makes this intuition precise. 

\subsection{Related Work}

\paragraph{Tarski fixed points.} Algorithms for the problem of finding Tarski fixed points on the $k$-dimensional grid of side length $n$ have only recently been considered. \cite{dang2020tarskialgorithm} gave an $O(\log^k (n))$ divide-and-conquer algorithm. \cite{fearnley2022faster} gave an $O(\log^2 (n))$ algorithm for the 3D grid and used it to construct an $O(\log^{2 \lceil k/3 \rceil} (n))$ algorithm for the $k$-dimensional grid of side length $n$. \cite{chen2022improved} extended their ideas to get an $O(\log^{\lceil (k+1)/2 \rceil}(n))$ algorithm.

 \cite{etessami2019tarski} showed a lower bound of $\Omega(\log^2(n))$ for the 2D grid, implying the same lower bound for the $k$-dimensional grid of side length $n$. This bound is tight for $k = 2$ and $k=3$, but there is an exponential gap for larger $k$. They also showed that the problem is in both PLS and PPAD, which by the results of \cite{fearnley2022cls} implies it is in CLS since PPAD $\cap$ PLS $=$ CLS.

\cite{CLY23} give a black-box reduction from the Tarski problem to the same problem with an additional promise that the input function has a unique fixed point. This result  implies that the Tarski problem and the unique Tarski problem have the same query complexity.

Next we briefly summarize  query and communication complexity results for two problems representative for the classes PLS and PPAD, respectively. These problems are finding  a  local minimum (representative for PLS) and a Brouwer fixed point of a continuous function  (representative for PPAD), respectively. In both cases, the existing lower bounds  also rely on hidden path constructions, which may be useful for proving  lower bounds in the Tarski setting. Query complexity lower bounds for PPAD $\cap$ PLS  were shown in \cite{HY17}. 

\paragraph{Brouwer fixed points.}	In the  Brouwer search problem, we are given a function $f : [0,1]^d \to [0,1]^d$ that is $L$-Lipschitz, for some constant $L > 1$. The algorithm has query access to the function $f$ and the task  is to find an $\eps$-approximate fixed point of $f$ using as few queries  as possible. The existence of a fixed point is guaranteed by Brouwer's fixed point theorem.

 The query complexity of computing an $\eps$-approximate Brouwer fixed point was studied in a series of papers starting with \cite{hirsch1989exponential},  which introduced a construction where  the function is induced by a  hidden walk. This was  later improved by \cite{chen2005algorithms} and \cite{chen2007paths}.

\paragraph{Local minima.} In the local search problem, we are given  a graph $G = (V,E)$ and a function $f : V \to \mathbb{R}$. A vertex $v$ is a local minimum if $f(v) \leq f(u) $ for all $(u,v) \in E$. An algorithm can probe a vertex $v$ to learn its value $f(v)$. The task is to find a vertex that is a local minimum using as few queries as possible. \cite{aldous1983minimization} obtains a lower bound  of $\Omega(2^{k/2-o(k)})$ on the query complexity for the Boolean hypercube $\{0,1\}^k$ by a random walk analysis.	
	Aldous' lower bound for the hypercube was later improved by 
	\cite{Aaronson06} to $\Omega(2^{k/2}/k^2)$ via a relational adversary method inspired from quantum computing.
	\cite{zhang2009tight} further improved this lower bound to  $\Theta(2^{k/2} \cdot \sqrt{k})$ via a ``clock''-based random walk construction.
	Meanwhile, \cite{llewellyn1989local} developed a deterministic divide-and-conquer algorithm.
	
	For the  $k$-dimensional grid $[n]^k$, 
	\cite{Aaronson06} used the relational adversary method  to show a randomized lower bound of $\Omega(n^{k/2-1} / \log n)$ for every constant $k \ge 3$. \cite{zhang2009tight} proved a randomized lower bound of $\Omega(n^{k/2})$ for every constant $k \ge 4$. 
	The work of \cite{sun2009quantum} closed further gaps in the quantum setting as well as the randomized $k=2$ case.

There are lower bounds for general graphs as a function of graph features such as separation number \cite{santha2004quantum} and vertex congestion \cite{BCR23}. There are upper bounds in terms of separation number \cite{santha2004quantum, llewellyn1989local} and genus~\cite{Verhoeven06}. \cite{babichenko2019communication} studied the communication complexity of local search, which this captures settings where data is stored  on different computers.

\section{Properties of the family of functions $\mathcal{F}_n^k$}

In this section we show that each function in the family $\mathcal{F}_n^k$ of Definition~\ref{def:F} is monotone and has a unique fixed point.

\begin{lemma}
For each $a \in \mathcal{L}_n^k$, the function $f^a$ from Definition \ref{def:F} is monotone.
\end{lemma}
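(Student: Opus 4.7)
The plan is to fix $v, w \in \mathcal{L}_n^k$ with $v \le w$ and verify $f^a_i(v) \le f^a_i(w)$ coordinate by coordinate. For brevity I would introduce the shorthand $d_i(u) := f^a_i(u) - u_i \in \{-1,0,+1\}$, so the goal becomes $v_i + d_i(v) \le w_i + d_i(w)$. Since $v_i \le w_i$, the inequality is immediate except in three ``dangerous'' configurations: $(d_i(v),d_i(w)) = (+1,-1)$, $(+1,0)$, or $(0,-1)$. I would handle each one in turn.

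The key structural observation I would emphasize first is that the prefix conditions governing \eqref{eq:f_pi_q_def} interact cleanly with the order. The increment rule at coordinate $i$ requires $u_j \ge a_j$ for all $j < i$, a condition preserved when $u$ is replaced by a larger vector; the decrement rule requires $u_j \le a_j$ for all $j < i$, a condition preserved when $u$ is replaced by a smaller vector. This is the single monotonicity fact that drives every case.

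In the $(+1,-1)$ case, the two prefix conditions together force $a_j \le v_j \le w_j \le a_j$, hence $v_j = w_j = a_j$ for every $j<i$, while $v_i < a_i < w_i$ gives $v_i + 1 \le a_i \le w_i - 1$. In the $(+1,0)$ case, the prefix condition at $v$ propagates upward to $w$, so the only way the increment rule can fail at $w$ is $w_i \ge a_i$, whence $v_i + 1 \le a_i \le w_i$. The $(0,-1)$ case is symmetric: the prefix condition at $w$ propagates downward to $v$, so the decrement rule fails at $v$ only because $v_i \le a_i < w_i$, giving $v_i \le w_i - 1$.

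I expect no real obstacle here; the proof is a short case analysis once the ``prefix conditions propagate in the right direction under $\le$'' observation is in hand. The only thing to be careful about is not conflating the two prefix conditions (which point in opposite directions) and remembering that in the mixed $(+1,-1)$ scenario both prefix conditions simultaneously constrain $v$ and $w$, forcing equality of prefixes to $a$.
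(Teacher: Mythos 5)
Your proof is correct and follows essentially the same route as the paper's: a coordinate-wise case analysis driven by the observation that the increment prefix condition ($u_j \ge a_j$ for $j<i$) propagates upward and the decrement prefix condition propagates downward under $\le$. The paper organizes the cases by contradiction (splitting on whether the increment fired at the smaller point or the decrement at the larger one, then on $w_i - v_i$), while you split directly on the pair $(d_i(v), d_i(w))$, but the substance is identical.
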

\begin{proof}
    Consider two arbitrary vertices $u,v \in \mathcal{L}_n^k$ with $u \le v$.
    Suppose towards a contradiction that  $f^a(u) \le f^a(v)$ does not hold.
    Then there exists an index $i \in [k]$ such that $f^a_i(u) > f^a_i(v)$.
    Since $u \le v$, we have $u_i \le v_i$, so at least one of  $f^a_i(u) > u_i$ or  $f^a_i(v) < v_i$ holds.
    
    \begin{description} \item[{Case 1: $f^a_i(u) > u_i$.}]
        By definition of $f^a$, we then have $u_i < a_i$ and $u_j \ge a_j$ for all $j<i$.
        Since $u \le v$, we also have $v_j \ge a_j$ for all $j<i$.         Furthermore, we have  $v_i = u_i$, or $v_i = u_i+1$, or $v_i \ge u_i + 2$. We consider a few sub-cases:
        \begin{enumerate}[(a)]
        	\item ($v_i = u_i$). Then  $f^a_i(v) = v_i + 1 = u_i + 1 = f^a_i(u)$.
        	\item ($v_i = u_i+1$). Then  $v_i \le a_i$, so $f^a_i(v) \ge v_i = u_i + 1 = f^a_i(u)$.
        	\item ($v_i \ge u_i + 2$).
        Then $f^a_i(v) \ge v_i-1 \ge u_i+1 = f^a_i(u)$.
        \end{enumerate}
                In each subcase (a-c), we have $f^a_i(v) \ge f^a_i(u)$. This is in  contradiction with  $f^a_i(u) > f^a_i(v)$, thus case 1 cannot occur.

    \item[{Case 2: $f^a_i(v) < v_i$.}]
        By definition of $f^a$, we then have $v_i > a_i$ and $v_j \le a_j$ for all $j<i$.
        Since $u \le v$, we also have $u_j \le a_j$ for all $j<i$.
        Furthermore, we have  $u_i = v_i$, or $u_i = v_i - 1$, or $u_i \le v_i - 2$. We consider a few sub-cases:
        \begin{enumerate}[(a)]
        	\item ($u_i = v_i$). Then $f^a_i(u) = u_i - 1 = v_i - 1 = f^a_i(v)$.
        	\item ($u_i = v_i - 1$). Then $u_i \ge a_i$, so $f^a_i(u) \le u_i = v_i - 1 = f^a_i(v)$.
        	\item ($u_i \le v_i - 2$). Then $f^a_i(u) \le u_i + 1 \le v_i - 1 = f^a_i(v)$.       	
        \end{enumerate}
                In each subcase (a-c), we have $f^a_i(u) \le f^a_i(v)$. This in contradiction with $f^a_i(u) > f^a_i(v)$, thus case $2$ cannot occur either.
    \end{description}
    In both cases $1$ and $2$ we reached a contradiction, so the assumption that $f^a(u) \leq f^a(v)$  does not hold must have been false. Thus $f^a$ is monotone.
\end{proof}

\begin{lemma} \label{lem:f_unique_fixed_point}
For each $a \in \mathcal{L}_n^k$, the function $f^a \in \mathcal{F}_n^k$ has a unique fixed point at $a$.
\end{lemma}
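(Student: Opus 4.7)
The plan is to first verify that $a$ itself is a fixed point, and then argue uniqueness by looking at the smallest coordinate in which a hypothetical alternative fixed point $v$ differs from $a$.

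For the first step, I would fix an arbitrary index $i \in [k]$ and evaluate $f^a_i(a)$ directly from Definition~\ref{def:F}. Plugging in $v = a$, the condition $v_i > a_i$ becomes $a_i > a_i$, which fails, and similarly $v_i < a_i$ fails. So neither of the first two cases in \eqref{eq:f_pi_q_def} applies, leaving $f^a_i(a) = a_i$. Since this holds for every $i$, we get $f^a(a) = a$.

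For uniqueness, suppose $v \in \mathcal{L}_n^k$ is any fixed point of $f^a$, and assume for contradiction that $v \neq a$. Let $i$ be the smallest index with $v_i \neq a_i$. Then for every $j < i$ we have $v_j = a_j$, so in particular both $v_j \le a_j$ and $v_j \ge a_j$ hold for all such $j$ simultaneously. Now I split on the sign of $v_i - a_i$: if $v_i > a_i$, the first branch of \eqref{eq:f_pi_q_def} triggers and $f^a_i(v) = v_i - 1 \ne v_i$; if $v_i < a_i$, the second branch triggers and $f^a_i(v) = v_i + 1 \ne v_i$. Either way $f^a(v) \ne v$, contradicting that $v$ is a fixed point. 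Therefore $v = a$.

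There is no real obstacle here; the argument is essentially a direct unpacking of the case-distinction in Definition~\ref{def:F}. The only subtlety worth stating carefully is that the ``for all $j < i$'' clauses in both the upward and downward cases are automatically satisfied at the minimal disagreement index, because equality on earlier coordinates implies both inequalities vacuously; this is exactly what makes the first disagreeing coordinate get pushed toward $a_i$ and prevents $v$ from being fixed.
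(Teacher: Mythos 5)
Your proof is correct and follows essentially the same route as the paper: check $f^a(a)=a$ directly, then for any $v \neq a$ take the minimal disagreeing index $i$ and observe that the corresponding branch of \eqref{eq:f_pi_q_def} fires, so $f^a_i(v) \neq v_i$. Your explicit remark that equality on coordinates $j<i$ makes both ``for all $j<i$'' clauses hold is exactly the detail the paper's proof relies on implicitly.
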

\begin{proof}
By definition of $f^a$ we have $f^a(a) = a$, so $a$ is a fixed point of $f^a$.

Let $v \neq a$. 
    Then there exists $i \in [k]$ such that $v_i \ne a_i$. Let $i$ be the minimum such index. We have two cases:
    \begin{itemize}
\item ($v_i < a_i$):
        Then $f^a_i(v) = v_i + 1$, so $f^a(v) \neq v$.
\item ($v_i > a_i$):
        Then $f^a_i(v) = v_i - 1$, so $f^a(v) \neq v$.
\end{itemize}
    In both cases $v$ is not a fixed point, so $a$ is the only fixed point of $f^a$.
\end{proof}

\section{Lower bounds}

\subsection{Lower bound for the Boolean hypercube}
Using the family of functions $\mathcal{F}_n^k$ from \cref{def:F}, we can now  prove a randomized lower bound of $\Omega(k)$ for the Boolean hypercube $\{0,1\}^k$.

\begin{proposition}\label{thm:lb_k}
The randomized query complexity of ${TARSKI}(2,k)$ is $ \Omega(k)$.
\end{proposition}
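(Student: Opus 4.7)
I plan to use Yao's minimax principle: to lower bound the randomized query complexity, it suffices to exhibit an input distribution under which every deterministic algorithm requires $\Omega(k)$ queries in expectation to succeed with probability at least $9/10$. I take this distribution to be $\{f^A : A \sim \mathrm{Unif}(\{0,1\}^k)\}$, noting that by \cref{lem:f_unique_fixed_point} the unique fixed point of $f^A$ is $A$, so the algorithm must identify $A$. Let $V_1, \ldots, V_T$ be the adaptive deterministic queries and $R_t = f^A(V_t)$ the responses. Since $R_t$ is determined by $A$ together with the history $R_{<t}$, standard chain-rule manipulations give
\[
I(A; R_1, \ldots, R_T) \;=\; \sum_{t=1}^T H(R_t \mid R_{<t}).
\]
My strategy is to show $H(R_t \mid R_{<t}) = O(1)$ uniformly in $t$ and then invoke Fano's inequality.

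The key structural claim is a \emph{sub-cube invariant}: for every query $v$ and response $r$, the set $A_{v,r} := \{a \in \{0,1\}^k : f^a(v) = r\}$ is an axis-aligned sub-cube, i.e.\ has the form $\{a : a_\ell = c_\ell \text{ for } \ell \in I\}$ for some $I \subseteq [k]$ and $c \in \{0,1\}^I$. I will case-analyze on the at most two positions $i_1 \le i_2$ where $r$ differs from $v$ (by \cref{def:F} at most one case-1 and one case-2 change are possible in a single query). The changes at $i_1$ and $i_2$ directly fix certain bits of $a$. The ``no-change'' conditions at $\ell \notin \{i_1, i_2\}$ superficially look like awkward disjunctions, but they either become vacuously true (once $i_1$ is fixed, the universally-quantified hypothesis of the opposite case fails at $\ell' = i_1$) or reduce to per-coordinate equalities on $a_\ell$. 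Intersections of sub-cubes are sub-cubes, so the posterior of $A$ given any history $R_{<t}$ is always uniform on a sub-cube, described by a set $I_t \subseteq [k]$ of revealed coordinates with fixed values.

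Under this sub-cube posterior I will bound $H(R_t \mid R_{<t}) = O(1)$ by encoding $R_t$ via the pair $(i_1, i_2)$. Scanning the unrevealed coordinates in order, the first mismatch $i_1$ survives the $\ell$-th such position with probability $2^{-\ell}$ (an independent fair coin at each unrevealed coordinate), yielding $H(i_1) = O(1)$. Conditioned on $i_1$, the second change $i_2$ must be of the opposite case, and is determined by an analogous geometric race along the subsequent unrevealed positions with the appropriate bit of $v$, so $H(i_2 \mid i_1) = O(1)$. Therefore $H(R_t \mid R_{<t}) \le H(i_1) + H(i_2 \mid i_1) = O(1)$.

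Summing over $t$ gives $I(A; R_{1:T}) \le O(T)$. For success probability $\ge 9/10$, Fano's inequality forces $I(A; R_{1:T}) \ge \tfrac{9k}{10} - O(1)$, so $T = \Omega(k)$. The main obstacle I anticipate is verifying the sub-cube invariant: the raw definition of $f^a$ imposes conjunctions over all previous coordinates, and it is not immediately obvious that these decouple into per-coordinate equality constraints on $a$. The verification requires careful bookkeeping to show that the ``interference'' between $i_1$ and the later coordinates collapses all the nominally global conditions into single-variable constraints.
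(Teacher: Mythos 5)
Your proposal is correct and rests on the same structural facts as the paper's proof, but it packages them information-theoretically rather than combinatorially. The paper's argument tracks the set $\mathcal{I}_t$ of coordinates of $a$ known with certainty, proves by induction that the posterior on the remaining coordinates stays uniform (this is exactly your sub-cube invariant, and your anticipated case analysis does go through: once the first flipped $0$-position $i_1$ is located, the ``no-change'' conditions at $\ell>i_1$ become vacuous and those at $\ell<i_1$ collapse to $a_\ell=0$), bounds $\mathbb{E}\bigl[|\mathcal{I}_t|-|\mathcal{I}_{t-1}|\bigr]\le 4$ via the same geometric race you describe, and finishes with Markov plus the observation that an algorithm missing even one bit must guess. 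Your route replaces the last two steps with the chain rule $I(A;R_{1:T})=\sum_t H(R_t\mid R_{<t})$ and Fano; note that because the posterior is uniform on a sub-cube and $R_t$ is a deterministic function of $(A,R_{<t})$, your quantity $H(R_t\mid R_{<t}=h)$ actually \emph{equals} the paper's $\mathbb{E}[|\mathcal{I}_t|-|\mathcal{I}_{t-1}|\mid h]$, so the per-query bounds are the same number in two guises. What Fano buys you is a cleaner endgame (no separate guessing argument); what the paper's version buys is that it never needs the full entropy formalism. Two small points to tighten: (i) your sketch bounds a fixed horizon $T$, whereas the randomized complexity here is defined via the \emph{expected} number of queries, so you still need a truncation/Markov step (as the paper does) to convert ``$T=\Omega(k)$ for fixed-length protocols'' into a bound on the expectation; (ii) the race for $i_2$ runs over the $1$-positions of $v$ from the start, independently of the race for $i_1$ over the $0$-positions, rather than over positions ``subsequent'' to $i_1$ --- independence still gives $H(i_2\mid i_1)=O(1)$, so nothing breaks.
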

\begin{proof}
We proceed by invoking Yao's lemma.
    Let  $\mathcal{U}$ be the uniform distribution over the set of functions $\mathcal{F}_2^k$. 
  Let $\mathcal{A}$ be the deterministic algorithm with the smallest possible expected number of queries that succeeds with probability at least $4/5$, where both the expected query count and the success probability are for input drawn from $\mathcal{U}$.
    The algorithm $\mathcal{A}$ exists since there is a finite number  of deterministic algorithms for this problem, so the minimum is well defined.
    
    Let $D$ be the expected number of queries issued by $\mathcal{A}$ on input drawn from $\mathcal{U}$.
    Let $R$ be the randomized query complexity of $TARSKI(2,k)$; i.e. the expected number of queries required to succeed with probability at least $9/10$.
    Then Yao's lemma (\cite{yao1977minimaxprinciple}, Theorem 3) yields $2R \ge D$. 
    Therefore it  suffices to lower bound $D$.  

Let $f^a \in \mathcal{F}_2^k$ be the function drawn from $\mathcal{U}$ on which $\mathcal{A}$ is run.
    For each  $t \in \mathbb{N}$, let
    \begin{itemize} 
    \item $\mathcal{H}_{t}^a$ denote the history of queries and responses received at steps  $1, \ldots, t$.
    \item $G_{2}^{k}(\mathcal{H}_t^a)$ denote set of all functions $f^b \in \mathcal{F}_2^k$ that are consistent with the given history $\mathcal{H}_t^a$ of query answers (meaning if those queries were conducted on $f^b$ one would get exactly the same answers as in $\mathcal{H}_t^a$).
    \item $\mathcal{I}_t^a = \Bigl\{i \in [k] \mid \mbox{For all } f^b \in G_2^k(\mathcal{H}_t^a), \mbox{ we have } b_i = a_i\Bigr\}$; that is, $\mathcal{I}_t^a$ represents the set of coordinates that the algorithm has learned with certainty after the first $t$ queries.
    \item $v^t$ denote the $t$-th query submitted by $\mathcal{A}$.
\end{itemize}


Given a set of indices $S \subseteq [k]$, a function $f^b \in \mathcal{F}_2^k$ is called \emph{consistent with $(S, a)$} if the unique fixed point $b$ of $f^b$ agrees with $a$ on all coordinates in $S$, in other words, if $b_i = a_i$ for all $i \in S$.

We claim that a function $f^b \in \mathcal{F}_2^k$ is consistent with $(\mathcal{I}_t^a, a)$ if and only if $f^b \in G_2^k(\mathcal{H}_t^a)$. In other words, at time $t$ the \emph{only} information $\mathcal{A}$ has is the value of $a_i$ for all $i \in \mathcal{I}_t^a$. 
The backwards direction of the claim is immediate: if $f^b \in G_2^k(\mathcal{H}_t^a)$, then by the definition of $\mathcal{I}_t^a$, we have $b_i = a_i$ for all $i \in \mathcal{I}_t^a$.

We prove the other direction by induction on $t$.
We have $G_2^k(\mathcal{H}_0^a) = \mathcal{F}_2^k$, so the base case of $t=0$  trivially holds.
We assume the inductive hypothesis holds for $t-1$ and prove it for $t$.

Consider an arbitrary function $f^b \in \mathcal{F}_2^k$ consistent with $(\mathcal{I}_t^a, a)$.
Since $\mathcal{I}_{t-1}^a \subseteq \mathcal{I}_t^a$, by the inductive hypothesis we have $f^b \in G_2^k(\mathcal{H}_{t-1}^a)$. In order to show that $f^b \in G_2^k(\mathcal{H}_t^a)$, we only need to show that $f^b(v^t) = f^a(v^t)$.

We consider the indices where $v^t$ has zeroes and divide in two cases; the analysis for indices where $v^t$ has ones is symmetric. Initialize $\mathcal{I}_t^a = \mathcal{I}_{t-1}^a$. We explain in each case what new indices may enter the set  $\mathcal{I}_t^a$.
\begin{enumerate}[(i)]
\item There exists $i \in [k]$ such that $\bigl( v^t_i = 0$ and $f^a_i(v^t) = 1 \bigr)$. This implies three things:
\begin{itemize}
    \item $a_i=1$, so $i$ is added to $\mathcal{I}_t^a$.
    \item For all $j < i$ such that $v^t_j = 0$, it must be the case that $a_j = 0$; otherwise, the bit at index $j$ would have been corrected to a $1$ instead of the bit at index $i$ getting corrected. Therefore, each such $j$ is added to $\mathcal{I}_t^a$.
    \item No information is revealed about the bits at locations  $j > i$ with $v^t_j = 0$, since regardless of the value of $a_j$, we have $f^a_j(v^t) = 0$. No such index $j$ is added to $\mathcal{I}_t^a$, though some may have already been in $\mathcal{I}_{t-1}^a$.
\end{itemize}

\item There is no $i \in [k]$ such that $\bigl( v^t_i = 0$ and $f^a_i(v^t) = 1 \bigr)$. Then for all $j \in [k]$ such that $v_j^t = 0$, it must have been the case that $a_j = 0$. Therefore, all such $j$ are added to $\mathcal{I}_t^a$.
\end{enumerate}

Suppose for contradiction that $f^b(v^t) \neq f^a(v^t)$. Let $i \in [k]$ be the smallest index where they differ, meaning $f^b_i(v^t) \neq f^a_i(v^t)$.
If $i \in \mathcal{I}_t^a$, then we would have $f^b_i(v^t) = f^a_i(v^t)$ since $f^b$ is consistent with $(\mathcal{I}_t^a, a)$; therefore, $i \notin \mathcal{I}_t^a$. We must further have $f^a_i(v^t) = v^t_i$, since otherwise $i$ would have been added to $\mathcal{I}_t^a$.

Without loss of generality, let $v^t_i = 0$ (the case where $v^t_i = 1 $ is symmetric). Since $i \notin \mathcal{I}_t^a$, there exists an index $j < i$ such that $v^t_j = 0$ and $f^a_j(v^t) = 1$. Because $j<i$ and $i$ is the smallest index for which $f_i^b(v^t) \neq f_i^a(v^t)$, we also have $f^b_j(v^t) = f^a_j(v^t) = 1$. But then $f^b_i(v^t) = 0 = f^a_i(v^t)$, as only one $0$ (the one at $j$) could have been changed to a $1$. This  contradicts the assumption that $f^b_i(v^t) \neq f^a_i(v^t)$. Thus we have $f^b(v^t) = f^a(v^t)$, which completes the inductive step.

We now have that a function $f \in \mathcal{F}_2^k$ is consistent with $(\mathcal{I}_t^a, a)$ if and only if $f \in G_2^k(\mathcal{H}_t^a)$. Suppose  algorithm $\mathcal{A}$ returned an answer after $t$ queries, where $|\mathcal{I}_t^a| < k$. Then there would be multiple functions consistent with $(\mathcal{I}_t^a, a)$, so we would have $|G_2^k(\mathcal{H}_t^a)| \geq 2$. Therefore, it has to guess the values of the coordinates it does not know, so it would make an error with probability at least $1/2$:
\begin{align} 
    \Pr \Bigl[\mbox{$\mathcal{A}$ succeeds within $t$ queries} \mid |\mathcal{I}_t^a| < k\Bigr] \leq \frac{1}{2} \,.
    \label{eq:A-fails-often-if-I-less-than-k}
\end{align}

Accordingly, we now seek to bound $|\mathcal{I}_t^a|$.
We argue that the expected number of bits learned with each query is upper bounded by a constant, that is $\Ex[|\mathcal{I}_t^a| - |\mathcal{I}_{t-1}^a| \mid \mathcal{H}_{t-1}^a] \leq 4$.

    


    For $\mathfrak{b} \in \{0, 1\}$, let $c_t(\mathfrak{b})$ be the index of the bit where $v_{c_t(\mathfrak{b})}^t = \mathfrak{b}$ and $f^a_{c_t(\mathfrak{b})}(v^t) = 1-\mathfrak{b}$, or $c_t(\mathfrak{b}) = \infty$ if no such index exists.
    There cannot be two such indices $c_t(\mathfrak{b})$ for any particular values of $t$ and $\mathfrak{b}$ since only the first digit that is lower (respectively higher) than the corresponding digit in $a$ is adjusted by functions $f^a \in \mathcal{F}_2^k$.
    Then define $\Delta_t(\mathfrak{b})$ as:
    \begin{equation}
        \Delta_t(\mathfrak{b}) = \left\{ i \in [k] \mid i \not \in \mathcal{I}_{t-1}^a, v_i^t = \mathfrak{b}, \mbox{ and } i \leq c_t(\mathfrak{b})\right\}\,.
    \end{equation}

    In other words $\Delta_t(\mathfrak{b})$ is precisely the set of indices $i \in [k]$ previously identified as being added to $\mathcal{I}_t^a$ (which were not in $\mathcal{I}_{t-1}^a$) with the property that $v_i^t = \mathfrak{b}$. Moreover, because $G_2^k(\mathcal{H}_t^a)$ is characterized only by $\mathcal{I}_t^a$, no additional indices are added. 
    We therefore have 
    \begin{align} \label{eq:delta_0_union_delta_1}
    \Delta_t(0) \cup \Delta_t(1) = \mathcal{I}_t^a \setminus \mathcal{I}_{t-1}^a \,.
    \end{align}

    An illustration of an execution history for the first three queries with the sets $\Delta_t(\mathfrak{b})$ can be found in Figure~\ref{fig:illustration_ct_deltat}.
    
\begin{figure}[h] 
\centering
\begin{tabular}{|c|c|c|c|c|c|c|c|}
\hline
$a$ & $0$ & $0$ & $1$ & $1$ & $1$ & $1$ & $0$ \\
\hline
$v^1$ & $0$ & $1$ & $1$ & $1$ & $0$ & $0$ & $1$ \\
\hline
$f^a(v^1)$ & \cellcolor[HTML]{ffbbbb}$0$ & \cellcolor[HTML]{bbbbff}$0$; $c_1(1)$ & $1$ & $1$ & \cellcolor[HTML]{ffbbbb}$1$; $c_1(0)$ & $0$ & $1$ \\
\hline
$v^2$ & \cellcolor[HTML]{cccccc}$0$ & \cellcolor[HTML]{cccccc}$0$ & $1$ & $0$ & \cellcolor[HTML]{cccccc}$1$ & $0$ & $1$ \\
\hline
$f^a(v^2)$ & \cellcolor[HTML]{cccccc}$0$ & \cellcolor[HTML]{cccccc}$0$ & \cellcolor[HTML]{bbbbff}$1$ & \cellcolor[HTML]{ffbbbb}$0$; $c_2(0)$ & \cellcolor[HTML]{cccccc}$1$ & $0$ & \cellcolor[HTML]{bbbbff}$0$; $c_2(1)$ \\
\hline
$v^3$ & \cellcolor[HTML]{cccccc}$0$ & \cellcolor[HTML]{cccccc}$0$ & \cellcolor[HTML]{cccccc}$1$ & \cellcolor[HTML]{cccccc}$1$ & \cellcolor[HTML]{cccccc}$1$ & $0$ & \cellcolor[HTML]{cccccc}$0$ \\
\hline
$f^a(v^3)$ & \cellcolor[HTML]{cccccc}$0$ & \cellcolor[HTML]{cccccc}$0$ & \cellcolor[HTML]{cccccc}$1$ & \cellcolor[HTML]{cccccc}$1$ & \cellcolor[HTML]{cccccc}$1$ & \cellcolor[HTML]{ffbbbb}$1$; $c_3(0)$ & \cellcolor[HTML]{cccccc}$0$ \\
\hline
\end{tabular}
\caption{An example series of three queries for $k=7$. Each row is  a vector in $\mathcal{L}_2^k$. The first row is the vector $a$, corresponding to the hidden fixed point. Each of the next rows represents either a query $v^t$ or its answer $f^a(v^t)$ (thus $v^1$, $f^a(v^1)$, $v^2$, $f^a(v^2)$, and so on). 
The red cells indicate the corresponding $\Delta_t(0)$, the blue cells indicate the set $\Delta_t(1)$, and the gray cells indicate the set $\mathcal{I}_t^a$. The finite values of $c_t(\mathfrak{b})$ are marked. For the third query, $c_3(1) = \infty$.}
\label{fig:illustration_ct_deltat}
\end{figure}

    The distribution of $|\Delta_t(\mathfrak{b})|$ can be bounded effectively. For any $C \in \mathbb{N}$, we have $|\Delta_t(\mathfrak{b})| > C$ only if the first $C$ indices $i \in [k]$ with the property that both  $i \not \in \mathcal{I}_{t-1}^a$ and  $v_i^t = \mathfrak{b}$ are guessed correctly, i.e. $a_i = \mathfrak{b}$. 
   Therefore:
    \begin{equation} \label{eq:bound_probability_delta_more_than_C}
        \Pr \bigl[|\Delta_t(\mathfrak{b})|  > C \mid \mathcal{H}_{t-1}^a \bigr] \leq 2^{-C} \,. 
    \end{equation}
    We can bound the expected value of $|\Delta_t(\mathfrak{b})|$ as:
    \begin{align}
        \mathbb{E}\bigl[|\Delta_t(\mathfrak{b})| \mid \mathcal{H}_{t-1}^a\bigr] &= \sum_{C=0}^{\infty} \Pr \bigl[|\Delta_t(\mathfrak{b})| > C \mid \mathcal{H}_{t-1}^a\bigr] \explain{By Lemma \ref{lem:natural-number-rv-expectation}} \\
        &\leq \sum_{C=0}^{\infty} 2^{-C} \explain{By  \eqref{eq:bound_probability_delta_more_than_C}}\\
        &= 2\,. \label{eq:upper_bound_size_of_delta_of_b}
    \end{align}

    Using \eqref{eq:delta_0_union_delta_1} and \eqref{eq:upper_bound_size_of_delta_of_b}, we get:
    \begin{align} \label{eq:difference_in_size_s_t_s_t_minus_1}
        \mathbb{E} \bigl[|\mathcal{I}_t^a| - |\mathcal{I}_{t-1}^a| \mid \mathcal{H}_{t-1}^a\bigr] &= \mathbb{E} \bigl[|\Delta_t(0)| \mid \mathcal{H}_{t-1}^a\bigr] + \mathbb{E} \bigl[|\Delta_t(1)| \mid \mathcal{H}_{t-1}^a\bigr] \leq 2 + 2 = 4\,. 
    \end{align}

    Since the upper bound of $4$ applies for all histories $\mathcal{H}_{t-1}^a$, taking expectation over all possible histories gives:
    \begin{equation} \label{eq:up_s_t_minus_s_t_minus_1_simplified}
        \mathbb{E} \bigl[|\mathcal{I}_t^a| - |\mathcal{I}_{t-1}^a|\bigr]  \leq 4 \qquad \forall t \in \mathbb{N}, t \geq 1 \,. 
    \end{equation}
    Since $|\mathcal{I}_0^a| = 0$, inequality \eqref{eq:up_s_t_minus_s_t_minus_1_simplified} implies  $\mathbb{E} \bigl[|\mathcal{I}_t^a|\bigr] \leq 4t$ for all $t \in \mathbb{N}$.
     
    Let $T = \lfloor  k/80 \rfloor $.
    Then by Markov's inequality applied to the random variable $\mathcal{I}_T^a$, we have
    \begin{align} \label{eq:prob-I-bigger-than-k-less-than-1-20}
        \Pr\bigl[|\mathcal{I}_T^a| \ge k\bigr] \le \frac{\E \bigl[|\mathcal{I}_T^a|\bigr]}{k} \leq 
        \frac{4T}{k} = \frac{4  \lfloor  k/80 \rfloor }{k} \leq \frac{1}{20} \,.
    \end{align}

When $\mathcal{A}$ succeeds within $T$ queries, it is because it learned all the coordinates (i.e. $|\mathcal{I}_t^a| \geq k$, which has probability at most $1/20$ by \eqref{eq:prob-I-bigger-than-k-less-than-1-20}) or it did not know all the coordinates by the end of the $T$-th query and guessed the remaining ones (meaning its success probability in this case would be at most $1/2$ by \eqref{eq:A-fails-often-if-I-less-than-k}). Combining these observations yields 
\begin{align} \label{eq:ub_success_with_few_queries}
    \Pr[\mathcal{A} \; \mbox{succeeds} \mid \mathcal{A} \; \mbox{ issued at most } T \mbox{ queries}] &\leq \frac{1}{20} + \frac{1}{2} =\frac{11}{20} \,.
\end{align}

Suppose for contradiction that the probability $\mathcal{A}$ makes more than $T$ queries is less than $1/4$. Then, by \eqref{eq:ub_success_with_few_queries}:

\begin{align}
    \Pr[\mathcal{A} \mbox{ succeeds}] &= \Pr[\mathcal{A} \; \mbox{succeeds} \mid \mathcal{A} \; \mbox{ issued at most } T \mbox{ queries}] \cdot \Pr[\mathcal{A} \mbox{ issued at most } T \mbox{ queries}]  \notag \\ 
    & \; 
    + \Pr[\mathcal{A} \; \mbox{succeeds} \mid \mathcal{A} \mbox{ issued more than}\; T \; \mbox{queries}] \cdot \Pr[ \mathcal{A} \mbox{ issued more than}\; T \; \mbox{queries}] \notag \\
    & < \frac{11}{20} \cdot 1 + 1 \cdot  \frac{1}{4}\notag \\
    & = \frac{4}{5} \,. \label{eq:contradiction_eq}
\end{align}
But $\mathcal{A}$ succeeds with probability at least $4/5$, which contradicts \eqref{eq:contradiction_eq}. Therefore, the expected number  of queries issued by $\mathcal{A}$ can be bounded as follows: 
\begin{align}
    D \geq T/4= (1/4) \cdot \lfloor k/80 \rfloor \in \Omega(k)\,.
\end{align}
This completes the proof. 
\end{proof}

We include   the statement of the next folk lemma, a proof of which can be  found, e.g.,  in \cite{MU_book}.
\begin{lemma}[Lemma 2.9 in \cite{MU_book}]
    \label{lem:natural-number-rv-expectation}
Let $X$ be a discrete random variable that takes  on only non-negative integer values. Then 
$        \Ex[X] = \sum_{i = 1}^{\infty} \Pr [X \geq i]\,.$
\end{lemma}

\subsection{Lower bound for the $k$-dimensional grid of side length $n$}
In this section we  show the randomized lower bound of $\Omega(k)$ also holds for $TARSKI(n,k)$. Afterwards, we prove the construction from \cref{def:F} also yields a lower bound of $\Omega\left(\frac{k \log{n}}{\log(k)}\right)$ for the $k$-dimensional grid of side length $n$.
\begin{lemma} \label{lem:grid_harder_than_hypercube}
 For $k, n \in \mathbb{N}$ with $n \geq 2$, the randomized query complexity of $TARSKI(n,k)$ is greater than or equal to 
   the randomized query complexity of $TARSKI(2,k)$.
\end{lemma}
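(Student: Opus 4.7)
The plan is to prove \cref{lem:grid_harder_than_hypercube} by a simple reduction from $TARSKI(2,k)$ to $TARSKI(n,k)$. Specifically, I would show that any randomized algorithm solving $TARSKI(n,k)$ with expected query count $Q$ and success probability $p$ can be converted into a randomized algorithm solving $TARSKI(2,k)$ with the same expected query count and success probability.

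The construction is as follows. Given a monotone function $g : \{0,1\}^k \to \{0,1\}^k$, define two monotone maps $\phi : \mathcal{L}_n^k \to \{0,1\}^k$ and $\psi : \{0,1\}^k \to \mathcal{L}_n^k$ by
\begin{align*}
\phi(v)_i = \begin{cases} 0 & \text{if } v_i = 0 \\ 1 & \text{otherwise,}\end{cases} \qquad \psi(b)_i = \begin{cases} 0 & \text{if } b_i = 0 \\ n-1 & \text{if } b_i = 1.\end{cases}
\end{align*}
Both $\phi$ and $\psi$ are clearly monotone with respect to the coordinatewise order. Let $F = \psi \circ g \circ \phi$. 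Then $F : \mathcal{L}_n^k \to \mathcal{L}_n^k$ is monotone as a composition of monotone functions.

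The next step is to verify that a fixed point of $F$ yields a fixed point of $g$. Suppose $F(v) = v$. Since $\psi$ takes values in $\{0, n-1\}^k$, every coordinate of $v$ lies in $\{0, n-1\}$. For $n \ge 2$ this means $\phi(v)_i = 0$ iff $v_i = 0$ iff $\psi(g(\phi(v)))_i = 0$ iff $g(\phi(v))_i = 0$, so $\phi(v)$ is a fixed point of $g$. Conversely, evaluating $F$ at any vertex $v \in \mathcal{L}_n^k$ requires exactly one query to $g$, namely at the point $\phi(v)$, after which $F(v) = \psi(g(\phi(v)))$ can be computed without further queries.

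Putting the pieces together: any algorithm $\mathcal{B}$ for $TARSKI(n,k)$ induces an algorithm $\mathcal{B}'$ for $TARSKI(2,k)$ that simulates $\mathcal{B}$ on the function $F$, answering each of $\mathcal{B}$'s queries using a single query to $g$, and finally returning $\phi(v)$ where $v$ is the vertex output by $\mathcal{B}$. The expected number of queries made by $\mathcal{B}'$ equals that of $\mathcal{B}$, and $\mathcal{B}'$ succeeds whenever $\mathcal{B}$ does. Taking $\mathcal{B}$ to be the optimal randomized algorithm for $TARSKI(n,k)$ yields the claimed inequality. There is no substantial obstacle here; the only things to check carefully are monotonicity of the composition and the bijective correspondence between fixed points of $F$ and fixed points of $g$ under $\phi$.
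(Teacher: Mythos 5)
Your proof is correct and uses essentially the same reduction as the paper: embed the hypercube instance into the grid by clamping coordinates. The only cosmetic difference is that you compose with $\psi$ on the output side (so the embedded function lands in $\{0,n-1\}^k$ and the answer must be translated back via $\phi$), whereas the paper leaves the output in $\{0,1\}^k$ so the fixed point carries over directly.
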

\begin{proof}
    We show a reduction from $TARSKI(2,k)$ to $TARSKI(n,k)$.
    Let $f^* : \{0,1\}^k \to \{0,1\}^k$ be an arbitrary instance of $TARSKI(2,k)$. As such, $f^*$ is monotone.
    
    Let $g : \{0,1,\ldots,n-1\}^k \to \{0,1\}^k$ be the clamp function, given by 
    \begin{align} 
    g(v) = (g_1(v), \ldots, g_k(v)), \mbox{ where } g_i(v) = \min(v_i, 1)\; \forall i \in [k]\,.
    \end{align} 
    Then define $f : \{0,1,\ldots,n-1\}^k \to \{0,1,\ldots,n-1\}^k$ as $f(v) = f^*(g(v))$.

    To show that $f$ is monotone, let $u,v \in \{0,1,\ldots,n-1\}^k$ be arbitrary vertices with $u \le v$. Then $g(u) \le g(v)$, so $f(u) = f^*(g(u)) \le f^*(g(v)) = f(v)$ by monotonicity of $f^*$. Thus $f$ is monotone and has a fixed point.

    Every vertex $u \in \{0,1,\ldots,n-1\}^k \setminus \{0,1\}^k$ is not a fixed point of $f$, since $f(u) \in \{0,1\}^k$. Every fixed point $u \in \{0,1\}^k$ of $f$ is also a fixed point of $f^*$ since $g(u) = u$ for all $u \in \{0,1\}^k$. Therefore all fixed points of $f$ are also fixed points of $f^*$.

    A query to $f$ may be simulated using exactly one query to $f^*$ since computing $g$ does not require any knowledge of $f^*$.

    Therefore any algorithm that finds a fixed point of $f$ can also be used to find a fixed point of $f^*$ in the same number of queries. Therefore the randomized query complexity of $TARSKI(n,k)$ is greater than or equal to the randomized query complexity of $TARSKI(2,k)$.
 \end{proof}

Applying \cref{lem:grid_harder_than_hypercube} to \cref{thm:lb_k} directly gives the following corollary.
\begin{corollary} \label{eq:lb_n_k_as_corollary}
The randomized query complexity of ${TARSKI}(n,k)$ is $\Omega(k)$.
\end{corollary}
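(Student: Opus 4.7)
The plan is to observe that this corollary is essentially immediate from the two results that precede it, so the only task is to compose them carefully. Specifically, \cref{thm:lb_k} establishes that the randomized query complexity of $TARSKI(2,k)$ is $\Omega(k)$, and \cref{lem:grid_harder_than_hypercube} establishes that the randomized query complexity of $TARSKI(n,k)$ is at least that of $TARSKI(2,k)$. Chaining these two inequalities yields the desired bound.

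In slightly more detail, I would first recall the content of the reduction in \cref{lem:grid_harder_than_hypercube}: any monotone $f^* : \{0,1\}^k \to \{0,1\}^k$ is lifted to a monotone $f : \mathcal{L}_n^k \to \mathcal{L}_n^k$ via $f = f^* \circ g$, where $g$ clamps each coordinate to $\{0,1\}$; the fixed points of $f$ coincide with those of $f^*$, and each query to $f$ is simulated by a single query to $f^*$ without further knowledge of $f^*$. Therefore any (randomized) algorithm solving $TARSKI(n,k)$ with $q$ queries and success probability at least $9/10$ yields an algorithm solving $TARSKI(2,k)$ with the same number of queries and the same success probability, so the randomized query complexity only increases from the hypercube to the grid.

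Combining this with the $\Omega(k)$ lower bound for the hypercube from \cref{thm:lb_k} finishes the proof. There is no real obstacle here; the entire corollary is a one-line composition, and I would present it as such in the final write-up rather than reproving either ingredient.
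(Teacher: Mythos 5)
Your proposal is correct and matches the paper exactly: the paper also obtains this corollary by directly applying \cref{lem:grid_harder_than_hypercube} to \cref{thm:lb_k}, with no additional argument needed.
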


We also get the following lower bound for all $n$ and $k$.
 \begin{proposition}
    \label{thm:k-log-n-over-log-k-lower-bound}
    The randomized query complexity of  ${TARSKI}(n,k)$  is $\Omega\left( \frac{k\log(n)}{\log(k)}\right)$.
 \end{proposition}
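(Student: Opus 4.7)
The plan is to prove Proposition~\ref{thm:k-log-n-over-log-k-lower-bound} via a decision-tree counting argument, within the Yao's-lemma framework used in Proposition~\ref{thm:lb_k}. I would let $\mathcal{U}$ be the uniform distribution over $\mathcal{F}$ and $\mathcal{A}$ be the best deterministic algorithm succeeding with probability at least $4/5$ on input drawn from $\mathcal{U}$, with expected query count $D$. Since distinct functions in $\mathcal{F}$ have distinct unique fixed points by Lemma~\ref{lem:f_unique_fixed_point}, and $|\mathcal{F}| = n^k$, it suffices by Yao's lemma to show $D = \Omega(k \log n / \log k)$.

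The heart of the argument is a structural observation I would establish first: for every $v, a \in \mathcal{L}_n^k$, the vector $f^a(v) - v$ has at most one $+1$ entry, at most one $-1$ entry, and all remaining entries equal to $0$. Indeed, from Definition~\ref{def:F}, if $i_1 < i_2$ both satisfied ``$v_i > a_i$ and $v_j \le a_j$ for all $j < i$,'' then the condition at $i_2$ would force $v_{i_1} \le a_{i_1}$, contradicting the condition at $i_1$; the incrementing case is symmetric. Consequently, for any fixed query $v$, the response $f^a(v)$ takes at most $(k+1)^2$ distinct values as $a$ ranges over $\mathcal{L}_n^k$, parameterized by the choice of decremented index and incremented index (each of which may also be ``none'').

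I would complete the proof with a standard counting step. Applying Markov's inequality to the stopping time of $\mathcal{A}$ yields a truncated deterministic algorithm $\mathcal{A}'$ of worst-case query complexity $10D$ that still succeeds on $\mathcal{U}$ with probability at least $4/5 - 1/10 = 7/10$. The decision tree of $\mathcal{A}'$ has depth at most $10D$ and branching factor at most $(k+1)^2$, hence at most $(k+1)^{20D}$ leaves. Each leaf outputs a single vertex, so at most $(k+1)^{20D}$ functions in $\mathcal{F}$ can be solved correctly by $\mathcal{A}'$. Requiring $(k+1)^{20D} \ge (7/10) \, n^k$ and taking logarithms yields $D = \Omega(k \log n / \log k)$, from which Yao's lemma gives the desired randomized lower bound.

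The main obstacle is the structural observation on $f^a(v) - v$; this is the source of the $\log k$ denominator and is precisely where the design of Definition~\ref{def:F} pays off, since it limits the information revealed per query to $O(\log k)$ bits regardless of how large $k$ is. Once this observation is established, both the Markov truncation and the leaf-counting step are routine. One minor point to watch is the edge case of very small $k$, where $\log k$ is small and the bound degenerates; this regime is subsumed by the $\Omega(k)$ lower bound of Corollary~\ref{eq:lb_n_k_as_corollary}, so no separate treatment is needed in the combined statement of Theorem~\ref{thm:intro_main}.
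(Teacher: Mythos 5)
Your proposal is correct and takes essentially the same route as the paper: Yao's lemma combined with the observation that $f^a(v)$ differs from $v$ in at most two coordinates, so each query has at most $(k+1)^2$ possible responses, followed by counting leaves of the decision tree against the $n^k$ distinct fixed points. The only (harmless) difference is that you convert the leaf count into a bound on the expected query count via a Markov truncation to a worst-case-depth tree, whereas the paper bounds the average leaf depth directly.
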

 \begin{proof}
We invoke Yao's Lemma in exactly
the same way as in the proof of \cref{thm:lb_k}. That is, let  $\mathcal{U}$ be the uniform distribution over the set of functions $\mathcal{F}_n^k$. 
  Let $\mathcal{A}$ be the deterministic algorithm with the smallest possible expected number of queries that succeeds with probability at least $4/5$, where both the expected query count and the success probability are for inputs drawn from $\mathcal{U}$.
     $\mathcal{A}$ exists since here the number of deterministic algorithms is finite, so the minimum is well defined.
    
    Let $D$ be the expected number of queries issued by $\mathcal{A}$ on input drawn from $\mathcal{U}$.
    Let $R$ be the randomized query complexity of $TARSKI(n,k)$; i.e. the expected number of queries required to succeed with probability at least $9/10$.
    Then Yao's lemma (\cite{yao1977minimaxprinciple}, Theorem 3) yields $2R \ge D$. 
    Therefore it  suffices to lower bound $D$. 
    
    
     For each vertex $v \in \{0, \ldots, n-1\}^k$, let $Q_v$ be the set of possible outputs when plugging in $v$:
     \begin{align} 
     Q_v = \bigl\{f^a(v) \mid a \in \{0,1,\ldots,n-1\}^k\bigr\} \,.
     \end{align}
     We next bound $|Q_v|$.
     By the definition of $f^a$, the vertex $f^a(v)$ differs from $v$ in at most two coordinates: the first $i \in [k]$ such that $v_i > a_i$ (if any) and the first $j \in [k]$ such that $v_j < a_j$ (if any). Each of $i$ and $j$ have $k+1$ options, corresponding to the $k$ dimensions and the possibility that no such dimension exists. Therefore
     \begin{align}
         |Q_v| \le (k+1)^2 \,. \label{eq:Qv_bound}
     \end{align}
     
     Recall that $\mathcal{A}$ is defined to be the best deterministic algorithm  that succeeds on $\mathcal{U}$ with probability at least $4/5$.
     Since $\mathcal{U}$ is uniform over $\mathcal{F}_n^k$, there must exist at least $(4/5) \cdot n^k$ inputs on which $\mathcal{A}$ outputs a fixed point.
     Then the decision tree of $\mathcal{A}$ must have at least $(4/5) \cdot n^k$ leaves since all supported inputs have different and unique fixed points.
     Every node of this tree has at most $(k+1)^2$ children, since $|Q_v| \le (k+1)^2$ for all $v$ by \eqref{eq:Qv_bound}.
     Therefore the average depth of the leaves is at least
     \begin{align} \label{eq:D_bound}
         \log_{(k+1)^2}\Bigl((4/5)n^k \Bigr) - 1 = \frac{\log_2((4/5) n^k)}{\log_2((k+1)^2)} - 1 \ge
         \frac{k \log_2 (n) -1}{2\log_2 (k) + 2} - 1 \in \Omega\left(\frac{k \log n}{\log k}\right) \,.
     \end{align}


Then on input distribution $\mathcal{U}$, algorithm $\mathcal{A}$ issues an expected number of queries of $D \in \Omega(\frac{k \log n}{\log k})$.
     Thus the randomized query complexity of $TARSKI(n,k)$ is $\Omega(\frac{k \log n}{\log k})$ as required.
 \end{proof}

The proof of \cref{thm:intro_main} follows from \cref{eq:lb_n_k_as_corollary} and \cref{thm:k-log-n-over-log-k-lower-bound}.

\begin{proof}[Proof of \cref{thm:intro_main}]
The randomized query complexity of $TARSKI(n,k)$ is 
 $\Omega(k)$ by \cref{eq:lb_n_k_as_corollary} and 
 $\Omega\left(\frac{k \log{n}}{\log(k)}\right)$ by \cref{thm:k-log-n-over-log-k-lower-bound}.
This implies a lower bound of $\Omega\left( k + \frac{k \log{n}}{\log(k)}\right)$ as required.
\end{proof}

\section{Upper bounds for the family of functions $\mathcal{F}_n^k$}

Intuitively, the true query complexity of $TARSKI(n, k)$ on functions in $\mathcal{F}_n^k$ should be $\Theta(k \log n)$. After all, each query provides feedback on whether roughly two coordinates were too high or too low. 
This idea does give an $O(k \log n)$ upper bound, which we present in \cref{prop:k_logn_upper_bound}. However, this can be improved upon when $k$ is larger than $\frac{n}{\log{n}}$. We show this by providing  an $O(k+n)$ upper bound in \cref{prop:k_plus_n_upper_bound}. This implies that the family of functions $\mathcal{F}_n^k$ cannot give an $\Omega(k \log n)$ lower bound.

\begin{proposition} \label{prop:k_logn_upper_bound}
    There is a deterministic $O(k \log n)$-query algorithm for $TARSKI(n, k)$ on the set of functions $\mathcal{F}_n^k$.
\end{proposition}
\begin{proof}
Let $f^a$ be the hidden function. Our task is to learn $a \in \mathcal{L}_n^k$.
We consider the following algorithm that works in stages. By the end of  each stage $i$, the algorithm has learned all the values $a_1, \ldots, a_{i}$. 
\paragraph{Stage i.} Let ${u}_i^0 = 0$ and ${w}_i^0 = n-1$. The algorithm will do  binary search on the $i$-th coordinate by submitting queries of the form $(a_1, \ldots, a_{i-1}, z, 0, \ldots, 0)$ and recursing based on the answer. The invariant maintained is that at the $t$-th query in stage $i$, we have  $a_i \in [u_i^t, w_i^t]$. Formally, the $i$-th stage  works as follows.

\begin{enumerate}[(a)]
\item Initialize $t=0$.
\item While $u_i^t < w_i^t$:
\begin{itemize}
\item Query vertex  ${c}^t = \bigl(a_1, \ldots, a_{i-1}, \left\lfloor\left(u_i^t + w_i^t\right)/2\right\rfloor, 0, \ldots, 0\bigr)$ to obtain its value $f^a({c}^t)$. We denote by $c_i^t$ the $i$-th coordinate of the vector $c^t$. Consider a few cases:
\begin{enumerate}[(i)]
\item If $f_i^a(c^t) < c_i^t$ then:  $w_i^{t+1} = c_i^t-1$ and $u_i^{t+1} = u_i^t$.
\item If $f_i^a(c^t) > c_i^t$ then: $u_i^{t+1} = c_i^t+1$ and $w_i^{t+1} = w_i^t$.
\item If $f_i^a(c^t) = c_i^t$ then: $u_i^{t+1} = w_i^{t+1} = c_i^t$.
\end{enumerate}
\item Update $t = t+1$.
\end{itemize}
\item Now we have $a_i = u_i^t = w_i^t$.
\end{enumerate}
Then we move on to stage $i+1$ and return if all coordinates have been learned.

To show why the algorithm works, we argue that in each stage $i$, the recursion maintains the invariant $a_i \in [u_i^t, w_i^t]$ for all $t$. Assume for contradiction that this condition is first violated at some $i$ and $t$. It cannot be $t=0$, as $a_i \in [0, n-1] = [u_i^0, w_i^0]$. We consider each of the three cases (i)-(iii) that could have occurred in step $t-1$:
\begin{itemize}
    \item In case (i), we had $f_i^a(c^{t-1}) < c_i^{t-1}$. By the definition of $f^a_i$, this can only happen if $a_i < c_i^{t-1}$. Therefore, $a_i \leq c_i^{t-1}-1 = w_i^t$. Since $a_i \geq u_i^{t-1}$ and $u_i^t = u_i^{t-1}$, the invariant was preserved.
    \item In case (ii), we had $f_i^a(c^{t-1}) > c_i^{t-1}$. By the definition of $f^a_i$, this can only happen if $a_i > c_i^{t-1}$. Therefore, $a_i \geq c_i^{t-1}+1 = u_i^t$. Since $a_i \leq w_i^{t-1}$ and $w_i^t = w_i^{t-1}$, the invariant was preserved.
    \item In case (iii), we had $f_i^a(c^{t-1}) = c_i^{t-1}$. As this was supposedly the first violation of the invariant, all of $a_1, \ldots, a_{i-1}$ have been learned correctly by the algorithm. Therefore, $f_i^a(c^{t-1}) = c_i^{t-1}$ if and only if $c_i^{t-1} = a_i$. Accordingly, $u_i^t$ and $w_i^t$ are both set to $c_i^{t-1}$, preserving the invariant.
\end{itemize}
In all three cases, the supposed violation could not have occurred. This is a  contradiction, so the invariant always holds and the algorithm is correct.

Each step of the algorithm halves the gap between $u_i^t$ and $w_i^t$, so each stage only takes $O(\log n)$ queries. Since there are $k$ stages, the overall number of queries is $O(k \log n)$.
\end{proof}

Next we present an algorithm that gives an $O(k+n)$ upper bound.

\begin{proposition} \label{prop:k_plus_n_upper_bound}
    There is a deterministic $O(k + n)$-query algorithm for $TARSKI(n, k)$ on the set of functions $\mathcal{F}_n^k$.
\end{proposition}

\begin{proof}
    For each coordinate $i \in [k]$ and each query index $t$, let $x^t_i$ and $y^t_i$ be the minimum and maximum possible values of $a_i$ given the first $t$ queries the algorithm makes. For example, $x^0_i = 0$ and $y^0_i = n-1$ for all $i \in [k]$, and the algorithm finishes in $T$ queries if $x^T_i = y^T_i$ for all $i \in [k]$.

    If the algorithm has not finished by its $(t+1)$-st query, it queries the vertex with coordinates:
    \begin{equation*}
        v^{t+1}_i = \begin{cases}
            x^t_i & \text{if $x^t_i = y^t_i$} \\
            x^t_i + 1 & \text{otherwise}
        \end{cases}
    \end{equation*}

    There are two possible outcomes from each such query:
    \begin{itemize}
        \item Some coordinate $j$ satisfies $f^a_j(v^{t+1}) = v^{t+1}_j - 1$. This immediately identifies $a_j = x^t_j$, as $a_j \geq x^t_j$ and $a_j < x^t_j + 1$. Since there are only $k$ coordinates to learn, this case can occur for at most $k$ queries before the algorithm terminates.
        \item No coordinate $j$ satisfies $f^a_j(v^{t+1}) = v^{t+1}_j - 1$. Then, for each $i$ such that $x^t_i < y^t_i$, the possibility of $a_i = x^t_i$ is ruled out. Since each coordinate only has $n$ possible values, this case can occur for at most $n$ queries before the algorithm terminates.
    \end{itemize}

    Therefore, this algorithm terminates within $O(k+n)$ queries.
\end{proof}

\section{Discussion}

It would be interesting to characterize the query complexity of the Tarski search problem as a function of the grid side-length $n$ and dimension $k$.

\section{Acknowledgements}
We are grateful to Davin Choo and Kristoffer Arnsfelt Hansen for useful discussions.

\bibliographystyle{alpha}

\bibliography{tarski_bib}

\begin{thebibliography}{EPRY20}

\bibitem[Aar06]{Aaronson06}
Scott Aaronson.
\newblock Lower bounds for local search by quantum arguments.
\newblock {\em {SIAM} J. Comput.}, 35(4):804--824, 2006.

\bibitem[Ald83]{aldous1983minimization}
David Aldous.
\newblock Minimization algorithms and random walk on the $ d $-cube.
\newblock {\em The Annals of Probability}, 11(2):403--413, 1983.

\bibitem[BCR24]{BCR23}
Simina Br{\^a}nzei, Davin Choo, and Nicholas Recker.
\newblock The sharp power law of local search on expanders.
\newblock In {\em Proceedings of the ACM-SIAM Symposium on Discrete Algorithms (SODA)}, 2024.

\bibitem[BDN19]{babichenko2019communication}
Yakov Babichenko, Shahar Dobzinski, and Noam Nisan.
\newblock The communication complexity of local search.
\newblock In {\em Proceedings of the 51st Annual ACM SIGACT Symposium on Theory of Computing}, pages 650--661, 2019.

\bibitem[CD05]{chen2005algorithms}
Xi~Chen and Xiaotie Deng.
\newblock On algorithms for discrete and approximate brouwer fixed points.
\newblock In {\em Proceedings of the thirty-seventh annual ACM symposium on Theory of computing}, pages 323--330, 2005.

\bibitem[CL22]{chen2022improved}
Xi~Chen and Yuhao Li.
\newblock Improved upper bounds for finding tarski fixed points.
\newblock In {\em Proceedings of the 23rd ACM Conference on Economics and Computation}, pages 1108--1118, 2022.

\bibitem[CLY23]{CLY23}
Xi~Chen, Yuhao Li, and Mihalis Yannakakis.
\newblock Reducing tarski to unique tarski (in the black-box model).
\newblock In Amnon Ta{-}Shma, editor, {\em Computational Complexity Conference (CCC)}, volume 264, pages 21:1--21:23, 2023.

\bibitem[CT07]{chen2007paths}
Xi~Chen and Shang-Hua Teng.
\newblock Paths beyond local search: A tight bound for randomized fixed-point computation.
\newblock In {\em 48th Annual IEEE Symposium on Foundations of Computer Science (FOCS'07)}, pages 124--134. IEEE, 2007.

\bibitem[DQY20]{dang2020tarskialgorithm}
Chuangyin Dang, Qi~Qi, and Yinyu Ye.
\newblock Computations and complexities of tarski's fixed points and supermodular games, 2020.

\bibitem[EPRY20]{etessami2019tarski}
Kousha Etessami, Christos Papadimitriou, Aviad Rubinstein, and Mihalis Yannakakis.
\newblock Tarski's theorem, supermodular games, and the complexity of equilibria.
\newblock In {\em Innovations in Theoretical Computer Science Conference (ITCS)}, volume 151, pages 18:1--18:19, 2020.

\bibitem[FGHS22]{fearnley2022cls}
John Fearnley, Paul Goldberg, Alexandros Hollender, and Rahul Savani.
\newblock The complexity of gradient descent: $\text{CLS} = \text{PPAD} \cap \text{PLS}$.
\newblock {\em Journal of the ACM}, 70:1--74, 2022.

\bibitem[For05]{tarski_pl_note}
Stephen Forrest.
\newblock The knaster-tarski fixed point theorem for complete partial orders, 2005.
\newblock Lecture notes, McMaster University: \url{http://www.cas.mcmaster.ca/~forressa/academic/701-talk.pdf}.

\bibitem[FPS22]{fearnley2022faster}
John Fearnley, D{\"o}m{\"o}t{\"o}r P{\'a}lv{\"o}lgyi, and Rahul Savani.
\newblock A faster algorithm for finding tarski fixed points.
\newblock {\em ACM Transactions on Algorithms (TALG)}, 18(3):1--23, 2022.

\bibitem[HPV89]{hirsch1989exponential}
Michael~D Hirsch, Christos~H Papadimitriou, and Stephen~A Vavasis.
\newblock Exponential lower bounds for finding brouwer fix points.
\newblock {\em Journal of Complexity}, 5(4):379--416, 1989.

\bibitem[HY17]{HY17}
Pavel Hub\'{a}\v{c}ek and Eylon Yogev.
\newblock Hardness of continuous local search: query complexity and cryptographic lower bounds.
\newblock In {\em Proceedings of the Twenty-Eighth Annual ACM-SIAM Symposium on Discrete Algorithms}, SODA '17, page 1352–1371, USA, 2017. Society for Industrial and Applied Mathematics.

\bibitem[KT28]{Knaster_Tarski}
B.~Knaster and A.~Tarski.
\newblock Un théorème sur les fonctions d'ensembles.
\newblock {\em Ann. Soc. Polon. Math.}, 6:133--134, 1928.

\bibitem[LTT89]{llewellyn1989local}
Donna~Crystal Llewellyn, Craig Tovey, and Michael Trick.
\newblock Local optimization on graphs.
\newblock {\em Discrete Applied Mathematics}, 23(2):157--178, 1989.

\bibitem[MU05]{MU_book}
Michael Mitzenmacher and Eli Upfal.
\newblock {\em Probability and Computing: Randomized Algorithms and Probabilistic Analysis}.
\newblock Cambridge University Press, 2005.

\bibitem[SS04]{santha2004quantum}
Miklos Santha and Mario Szegedy.
\newblock Quantum and classical query complexities of local search are polynomially related.
\newblock In {\em Proceedings of the thirty-sixth annual ACM symposium on Theory of computing}, pages 494--501, 2004.

\bibitem[SY09]{sun2009quantum}
Xiaoming Sun and Andrew Chi-Chih Yao.
\newblock On the quantum query complexity of local search in two and three dimensions.
\newblock {\em Algorithmica}, 55(3):576--600, 2009.

\bibitem[Tar55]{tarski1955lattice}
Alfred Tarski.
\newblock A lattice-theoretical fixpoint theorem and its applications.
\newblock {\em Pacific J. Math.}, 5:285--309, 1955.

\bibitem[Ver06]{Verhoeven06}
Yves~F. Verhoeven.
\newblock Enhanced algorithms for local search.
\newblock {\em Information Processing Letters}, 97(5):171--176, 2006.

\bibitem[Yao77]{yao1977minimaxprinciple}
Andrew Chi-Chih Yao.
\newblock Probabilistic computations: Toward a unified measure of complexity.
\newblock In {\em 18th Annual Symposium on Foundations of Computer Science (sfcs 1977)}, pages 222--227, 1977.

\bibitem[Zha09]{zhang2009tight}
Shengyu Zhang.
\newblock Tight bounds for randomized and quantum local search.
\newblock {\em SIAM Journal on Computing}, 39(3):948--977, 2009.

\end{thebibliography}




\end{document}